\author{{Zhiqiang Wei, Derrick Wing Kwan Ng, and Jinhong Yuan\vspace{-10mm}}
\thanks{Zhiqiang Wei, Derrick Wing Kwan Ng, and Jinhong Yuan are with the School of Electrical
Engineering and Telecommunications, the University of New South Wales, Australia (email: zhiqiang.wei@student.unsw.edu.au; w.k.ng@unsw.edu.au; j.yuan@unsw.edu.au).
Derrick Wing
Kwan Ng is supported under Australian Research Councils Discovery Early
Career Researcher Award funding scheme (project number DE170100137).
This work was supported in part by the Australia Research Council (ARC)
Discovery Project DP160104566 and Linkage Project LP160100708.}}
\title{\vspace{-5mm}Joint Pilot and Payload Power Control for Uplink MIMO-NOMA with MRC-SIC Receivers}
\newtheorem{Thm}{Theorem}
\newtheorem{proof}{proof}
\newtheorem{T-Prob}{Transformed Problem}
\DeclareMathOperator{\maxo}{maximize}
\begin{document}
\maketitle
\begin{abstract}
This letter proposes a joint pilot and payload power allocation (JPA) scheme to mitigate the error propagation problem for uplink multiple-input multiple-output non-orthogonal multiple access (MIMO-NOMA) systems.
A base station equipped with a maximum ratio combining and successive interference cancellation (MRC-SIC) receiver is adopted for multiuser detection.
The average signal-to-interference-plus-noise ratio (ASINR) of each user during the MRC-SIC decoding is analyzed by taking into account the error propagation due to the channel estimation error.
Furthermore, the JPA design is formulated as a non-convex optimization problem to maximize the minimum weighted ASINR and is solved optimally with geometric programming.
%
%
Simulation results confirm the developed performance analysis and show that our proposed scheme can effectively alleviate the error propagation of MRC-SIC and enhance the detection performance, especially for users with moderate energy budgets.
%
\end{abstract}
\section{Introduction}
Non-orthogonal multiple access (NOMA) has recently been recognized as a promising multiple access solution to fulfill the stringent quality of service (QoS) requirements of the fifth-generation (5G) wireless networks, such as high spectral efficiency and massive connectivity\cite{wong2017key}.
The principle of NOMA is to exploit the power domain for multiuser multiplexing and to adopt the successive interference cancellation (SIC) decoding at receivers to mitigate the multiuser interference\cite{wong2017key}.
In recent years, downlink NOMA has been extensively studied in the literature and it has been shown that downlink NOMA can achieve a considerable performance gain over conventional orthogonal multiple access (OMA) schemes in terms of spectral efficiency and energy efficiency.

In fact, NOMA inherently exists in uplink communications, since the electromagnetic waves are naturally superimposed at a receiving base station (BS) and the implementation of SIC is more affordable for BSs than user terminals.
For instance, a simple back-off power control scheme was proposed for uplink NOMA\cite{Zhangtobepublished}, while an optimal resource allocation algorithm to maximize the system sum rate was developed in \cite{Al-Imari2014}.
The authors in \cite{Yang2016NOMA} proposed a general power control framework to guarantee the QoS in downlink and uplink NOMA.
Most recently, multiple-input multiple-output NOMA (MIMO-NOMA) systems are of more interests\cite{Ding2015a,Xu2017}.
In particular, maximum ratio combining with successive interference cancellation (MRC-SIC) is an appealing and practical reception technique for uplink MIMO-NOMA owing to its low computational complexity.

Despite the fruitful research conducted on NOMA, only payload power allocation and ideal SIC decoding are considered in most of existing works, e.g.\cite{Al-Imari2014,Wei2017}.
For both single-antenna and multiple-antenna systems, it is well-known that error propagation of SIC decoding limits the promised performance gain brought by NOMA.
In practice, the sources of error propagation are two-fold: one is the channel estimation error (CEE) and the other is the erroneous in data detection.
This letter focuses on tackling the former issue via exploiting the non-trivial trade-off between the pilot and payload power allocation for uplink MIMO-NOMA systems for a given total energy budget.
Specifically, a higher pilot power yields a better channel estimation but leads to a less payload power for data detection.
In the meantime, the reduced payload power would introduce a lower inter-user interference (IUI) for other users.
Therefore, jointly designing the pilot and payload power allocation is critical for mitigating the error propagation.

In this letter, to alleviate the error propagation in SIC, we propose a joint pilot and payload power allocation (JPA) scheme for uplink MIMO-NOMA with a MRC-SIC receiver based on a practical minimum mean square error (MMSE) channel estimator.
%
%
We analyze the average signal-to-interference-plus-noise ratio (SINR) of each user during the MRC-SIC decoding.
Furthermore, under a total energy budget constraint for each user, the JPA design is formulated as a non-convex optimization problem to maximize the minimum weighted average SINR (ASINR).
The globally optimal solution of the JPA design problem is obtained by geometric programming.
Simulation results demonstrate that the proposed JPA scheme is beneficial to mitigate the error propagation, which enhances the data detection performance, especially in the moderate energy budget regime.

\section{System Model}

\subsection{System Model}
\begin{figure}[t]
\centering
\includegraphics[width=2.0in]{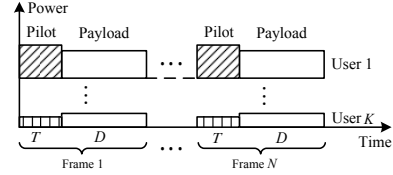}\vspace{-3mm}
\caption{An illustration of the frame structure of the uplink transmission.}\label{FrameStructure}\vspace{-6mm}
\end{figure}

We consider an uplink MIMO-NOMA communication system in a single-cell with a BS equipped with $M$ antennas serving $K$ single-antenna users.
All the $K$ users are allocated on the same frequency band.
Every user transmits multiple frames over multiple coherence time intervals (CTI) to the BS, where we assume that the duration of each frame is comparable to that of a CTI.
In particular, each frame consists of $T$ pilot symbols and $D$ data symbols consecutive in time, as shown in Figure \ref{FrameStructure}.
We assume that $T$ and $D$ are fixed as it is commonly implemented in practical systems for simplifying time synchronization.
Instead of considering symbol-level SIC as in most of existing works in NOMA \cite{Wei2017,Ding2015a}, we adopt the codeword-level SIC to exploit coding gain.
%
%
Note that, a codeword is usually much longer than the duration of a CTI and is spread over $N$ frames, which is an important scenario for time-varying channels with a short coherence time.

In frame $n$, the received signal at the BS during pilot transmission and data transmission are given by
\begin{equation}
\vspace{-2mm}
\hspace{-2mm}{{\mathbf{Y}}_{n}^{\mathrm{P}}} = {{\mathbf{H}}_n}{\mathbf{\Lambda T}} + {{\mathbf{Z}}_{n}^{\mathrm{P}}} \; \text{and}\;
{{\mathbf{Y}}_{n}^{\mathrm{D}}} = {{\mathbf{H}}_n}{\mathbf{B}}{{\mathbf{D}}_n} + {{\mathbf{Z}}_{n}^{\mathrm{D}}},
\end{equation}
respectively.
$\mathbf{T} \in \mathbb{C}^{K\times T}$ denotes the pilot matrix and ${\mathbf{D}_n} = \left[{\mathbf{d}_{n,1}},\ldots, \mathbf{d}_{n,K}\right]^{\mathrm{T}}\in \mathbb{C}^{K\times D}$ denotes the data matrix in frame $n$.
The diagonal matrices ${\mathbf{\Lambda}}$ and ${\mathbf{B}}$ are defined by ${\mathbf{\Lambda}} = \text{diag}\left\{\sqrt{\alpha_1}, \ldots, \sqrt{\alpha_K}\right\}$ and ${\mathbf{B}} = \text{diag}\left\{\sqrt{\beta_1}, \ldots, \sqrt{\beta_K}\right\}$, respectively, where $\alpha_k$ and $\beta_k$ denote the pilot and payload power of user $k$, respectively\footnote{The power allocation for pilot and payload are calculated centrally at the BS and to be distributed to all the users through some closed-loop power control scheme in control channels, e.g. \cite{Lee1996CDMA}.}.
We assume that normalized orthogonal pilots are assigned to all the users exclusively, i.e., ${\mathbf{T}}{\mathbf{T}}^{\mathrm{H}} = \mathbf{I}_K$ with $T \ge K$.
%
%
%
%
%
The matrices $\mathbf{Z}_{n}^{\mathrm{P}} \in \mathbb{C}^{M\times T}$ and $\mathbf{Z}_{n}^{\mathrm{D}} \in \mathbb{C}^{M\times D}$ denote the additive zero mean Gaussian noise with covariance matrix ${\sigma ^2}\mathbf{I}_M$ during training phase and data transmission phase in frame $n$, respectively.
The matrix $\mathbf{H}_n = \left[{{\mathbf{h}}_{n,1}},\ldots,{{\mathbf{h}}_{n,K}}\right] \in \mathbb{C}^{M\times K}$ contains the channels of all the users in frame $n$, where column $k$ denotes the channel vector of user $k$.
Rayleigh fading assumption is adopted in this letter, i.e.,
${{\mathbf{h}}_{n,k}} \sim \mathcal{CN}\left( \mathbf{0},\nu_k^2 \mathbf{I}_M \right)$, where $\mathcal{CN}\left( \mathbf{0},\nu_k^2 \mathbf{I}_M \right)$ denotes a circularly symmetric complex Gaussian distribution with zero mean and covariance matrix $\nu_k^2 \mathbf{I}_M$.
Scalar $\nu_k^2$ denotes the large scale fading of user $k$ capturing the effects of path loss and shadowing.
Since all the users are usually sufficiently separated apart compared to the wavelength, their channels are assumed to be independent with each other.
Therefore, their channel correlation matrix is given by a diagonal matrix ${{{\mathbf{R}}_{\mathbf{H}}}} = M\text{diag} \left\{\nu _1^2,\ldots,\nu _K^2\right\}$.
Without loss of generality, we assume that users are indexed in the descending order of large scale fading, i.e., $\nu _1^2 \ge \nu _2^2 \ge  \ldots  \ge \nu _K^2$.
In this letter, we define strong or weak user based on the large scale fading since it facilitates the characterization of the channel ordering statistically across the codeword, i.e., user 1 is the strongest user, while user $K$ is the weakest user.
As a result, the SIC decoding order is assumed to be the descending order of large scale fading, i.e., users $1,2,\ldots,K$ are decoded sequentially.

\section{Performance Analysis on ASINR}
In the $k$-th step of the MRC-SIC decoding, after cancelling the signals of the previous $k-1$ users, the post-processing signal of user $k$ in frame $n$ is given by
\vspace{-2mm}
\begin{align}\label{MRC-SIC}
\vspace{-9mm}
{{\mathbf{y}}_{n,k}^{\mathrm{T}}}
&= \underbrace{{\mathbf{\hat h}}_{n,k}^{\mathrm{H}}{{\mathbf{\hat h}}_{n,k}}\sqrt {{\beta _k}} {{\mathbf{d}}_{n,k}}}_{\text{desired signal}} + \underbrace{{\mathbf{\hat h}}_{n,k}^{\mathrm{H}}\sum\nolimits_{l = 1}^{k} {\boldsymbol{\varepsilon} _{n,l}} \sqrt {{\beta _l}} {{\mathbf{d}}_{n,l}}}_{\text{residual interference}} \notag\\[-1.5mm]
&+ \underbrace{{\mathbf{\hat h}}_{n,k}^{\mathrm{H}}\sum\nolimits_{l = k + 1}^K {{{\mathbf{h}}_{n,l}}} \sqrt {{\beta _l}} {{\mathbf{d}}_{n,l}}}_{\text{inter-user interference}}
+ \underbrace{{\mathbf{\hat h}}_{n,k}^{\mathrm{H}}{{\mathbf{Z}}_{n,d}}}_{\text{noise}},
\end{align}
where ${{\mathbf{y}}_{n,k}} \in \mathbb{C}^{T\times 1}$, ${{{\mathbf{\hat h}}}_{n,k}} \in \mathbb{C}^{M\times 1}$ denotes the MMSE channel estimates of user $k$ in frame $n$, and ${\boldsymbol{\varepsilon} _{n,k}} = {{\mathbf{h}}_{n,k}} - {{{\mathbf{\hat h}}}_{n,k}}$ denotes the corresponding CEE.
In \eqref{MRC-SIC}, we assume that the error propagation is only caused by the CEE but not affected by the erroneous in data detection.
It is a reasonable assumption if we can guarantee the ASINR of each user larger than a threshold to maintain the required bit-error-rate (BER) performance through the power control in the following.
To this end, we first define the instantaneous SINR of user $k$ in frame $n$ as:
\begin{equation}\label{SINR}
\vspace{-2mm}
{\mathrm{SIN}}{{\mathrm{R}}_{n,k}} = \frac{s_{n,k}}{G_{n,k} + Q_{n,k} + {\sigma ^2}}, \forall n,k,
\end{equation}
with $s_{n,k} = {{\mathbf{\hat h}}_{n,k}^{\mathrm{H}}{{{\mathbf{\hat h}}}_{n,k}}{\beta _k}}$, $G_{n,k} = \sum_{l = k + 1}^K {\frac{{{\mathbf{\hat h}}_{n,k}^{\mathrm{H}}{{\mathbf{h}}_{n,l}}{\mathbf{h}}_{n,l}^{\mathrm{H}}{{{\mathbf{\hat h}}}_{n,k}}}}{{{\mathbf{\hat h}}_{n,k}^{\mathrm{H}}{{{\mathbf{\hat h}}}_{n,k}}}}} {\beta _l}$, and $Q_{n,k} = \sum_{l = 1}^{k} {\frac{{{\mathbf{\hat h}}_{n,k}^{\mathrm{H}}\left( {{{\mathbf{h}}_{n,l}} - {{{\mathbf{\hat h}}}_{n,l}}} \right){{\left( {{{\mathbf{h}}_{n,l}} - {{{\mathbf{\hat h}}}_{n,l}}} \right)}^{\mathrm{H}}}{{{\mathbf{\hat h}}}_{n,k}}}}{{{\mathbf{\hat h}}_{n,k}^{\mathrm{H}}{{{\mathbf{\hat h}}}_{n,k}}}}} {\beta _l}$,
while the ASINR of user $k$ is defined by:
\begin{equation}\label{ASINR}
\vspace{-1mm}
{\overline{\mathrm{SINR}}_k} = {\rm{E}}\left\{ {\frac{{{s_{n,k}}}}{{{G_{n,k}} + {Q_{n,k}} + {\sigma ^2}}}} \right\}, \forall k,
\end{equation}
where $\mathrm{E}\left\{ \cdot \right\}$ denotes the expectation operation.
In fact, for codeword-level SIC, it is the ASINR rather than the instantaneous SINR that determines the detection performance\cite{GaoAESINR}.
Yet, for mathematical tractability, in the sequel, we adopt the lower bound of $\overline{\mathrm{SINR}}_k$ proposed in \cite{GaoAESINR} as
\begin{equation}\label{AESINR}
\vspace{-1mm}
\mathrm{ASINR}_k = \frac{{\mathrm{E}\left\{ {s_{n,k}}\right\}}}{\mathrm{E}\left\{ {{G_{n,k}}} \right\} + \mathrm{E}\left\{ {{Q_{n,k}}} \right\} + {\sigma ^2}} \le {\overline{\mathrm{SINR}}_k}, \forall k.
\end{equation}
Variable $s_{n,k}$ denotes the desired signal power of user $k$ in frame $n$, ${G_{n,k}}$ denotes the IUI power in the $k$-th step of MRC-SIC, and ${{Q_{n,k}}}$ denotes the residual interference power caused by CEE.
Clearly, $s_{n,k}$, ${G_{n,k}}$, and ${Q_{n,k}}$ are functions of pilot and payload power allocation.
Now, we express the closed-form of \eqref{AESINR} through the following theorem.

\begin{Thm}\label{theorem1}
For two independent random vectors ${\mathbf{x}}, {\mathbf{y}} \in \mathbb{C}^{M\times 1}$ with distribution of ${\mathbf{x}} \sim \mathcal{CN}\left( \mathbf{0},\sigma_x^2 \mathbf{I}_M \right)$, we define a scalar random variable ${\phi} = {{{\mathbf{y}}^{\mathrm{H}}{\mathbf{x}}}}/{{\left| {\mathbf{y}} \right|}}$.
Then, ${\phi}$ is independent with ${\mathbf{y}}$ and it is distributed as a complex Gaussian distribution with zero mean and variance of $\sigma_x^2$, i.e., ${\phi} \sim \mathcal{CN}\left( 0, \sigma_x^2\right)$.
\end{Thm}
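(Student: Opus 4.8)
The plan is to condition on $\mathbf{y}$ and exploit the unitary invariance of the isotropic complex Gaussian law of $\mathbf{x}$. Since $\mathbf{x}$ and $\mathbf{y}$ are independent, conditioning on $\mathbf{y}$ leaves the law of $\mathbf{x}$ unchanged, i.e. $\mathbf{x}\,|\,\mathbf{y} \sim \mathcal{CN}(\mathbf{0},\sigma_x^2\mathbf{I}_M)$. For a fixed realization of $\mathbf{y}$ with $\mathbf{y}\neq\mathbf{0}$ (which holds almost surely), set the deterministic unit vector $\mathbf{u} = \mathbf{y}/\lvert\mathbf{y}\rvert$, so that $\phi = \mathbf{u}^{\mathrm{H}}\mathbf{x}$ is a fixed linear functional of $\mathbf{x}$.

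First I would invoke the standard fact that any linear functional $\mathbf{a}^{\mathrm{H}}\mathbf{x}$ of a circularly symmetric complex Gaussian vector $\mathbf{x}\sim\mathcal{CN}(\mathbf{0},\sigma_x^2\mathbf{I}_M)$ is itself a scalar circularly symmetric complex Gaussian with zero mean and variance $\sigma_x^2\,\mathbf{a}^{\mathrm{H}}\mathbf{a}$. Applying this with $\mathbf{a}=\mathbf{u}$ and using $\mathbf{u}^{\mathrm{H}}\mathbf{u}=1$ shows that, conditioned on $\mathbf{y}$, $\phi \sim \mathcal{CN}(0,\sigma_x^2)$. Equivalently, I could complete $\mathbf{u}^{\mathrm{H}}$ to a unitary matrix $\mathbf{U}$ depending only on $\mathbf{y}$ and having $\mathbf{u}^{\mathrm{H}}$ as its first row; then $\mathbf{U}\mathbf{x}\sim\mathcal{CN}(\mathbf{0},\sigma_x^2\mathbf{I}_M)$ by unitary invariance and $\phi$ is simply its first entry.

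The key observation is that the resulting conditional law $\mathcal{CN}(0,\sigma_x^2)$ does not depend on the conditioning value of $\mathbf{y}$. From this both assertions follow at once: averaging the (constant) conditional law over $\mathbf{y}$ gives the marginal $\phi\sim\mathcal{CN}(0,\sigma_x^2)$, and the equality of the conditional and marginal laws of $\phi$ given $\mathbf{y}$ is precisely independence of $\phi$ and $\mathbf{y}$. To make the last step fully rigorous I would check it at the level of joint characteristic functions, verifying $\mathrm{E}\{e^{j\,\mathrm{Re}(w^{*}\phi)}g(\mathbf{y})\} = e^{-\sigma_x^2\lvert w\rvert^2/4}\,\mathrm{E}\{g(\mathbf{y})\}$ for every $w\in\mathbb{C}$ and every bounded measurable $g$. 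The main obstacle is entirely a matter of rigor rather than of ideas --- carefully justifying the passage from a ``constant conditional law'' to genuine independence, and dispensing with the null event $\mathbf{y}=\mathbf{0}$; the linear-functional computation and the variance bookkeeping $\sigma_x^2\lVert\mathbf{u}\rVert^2 = \sigma_x^2$ are routine.
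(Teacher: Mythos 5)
Your proposal is correct and follows essentially the same route as the paper: condition on $\mathbf{y}$, observe that $\phi\,|\,\mathbf{y}$ is zero-mean complex Gaussian with variance $\sigma_x^2$ independent of the conditioning value, and conclude both the marginal law and the independence from the constancy of the conditional law. You are merely more explicit than the paper about why the conditional law is Gaussian (unitary invariance of the isotropic Gaussian) and about upgrading ``constant conditional distribution'' to genuine independence, which tightens a step the paper states somewhat loosely.
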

\begin{proof}
It is clear that the random variable ${\phi}$ for a given ${\mathbf{y}}$ is complex Gaussian distributed, where its conditional mean and variance are given by
\begin{align}
\mathrm{E}\left\{ {{\phi}|{\mathbf{y}}} \right\} &= {{\mathbf{y}}^{\mathrm{H}}}\mathrm{E}\left\{ {\mathbf{x}} \right\}/{{\left| {\mathbf{y}} \right|}} = 0 \;\text{and}\notag\\[-0.5mm]
\mathrm{Var}\left\{ {{\phi}|{\mathbf{y}}} \right\} &= {{{\mathbf{y}}^{\mathrm{H}}\mathrm{E}\left\{ {{\mathbf{x}}{{\mathbf{x}}^{\mathrm{H}}}} \right\}{\mathbf{y}}}}/{{{{\left| {\mathbf{y}} \right|}^2}}} = \sigma_x^2,
\end{align}
respectively.
Since the conditional mean and variance of ${\phi}$ are uncorrelated with ${\mathbf{y}}$, hence ${\phi}$ is independent of ${\mathbf{y}}$ with zero mean and variance of $\sigma_x^2$.
This completes the proof.
\end{proof}

%

Now, following the MMSE channel estimation\cite{BigueshMMSE2006} and invoking Theorem 1, we can easily obtain $\mathrm{E}\left\{ {{s_{n,k}}} \right\} = {{\mathbf{A}}_k^{\mathrm{H}}}{{\mathbf{\Phi }}^{ - 1}}{{\mathbf{A}}_k}$ and $\mathrm{E}\left\{ {{G_{n,k}}} \right\} = \sum\nolimits_{l = k + 1}^K \nu_l^2 {\beta _l}$,
where ${\mathbf{\Phi }} = {{\mathbf{T}}^{\mathrm{H}}}{\mathbf{\Lambda}}{{\mathbf{R}}_{\mathbf{H}}}{\mathbf{\Lambda}}{\mathbf{T}} + {\sigma ^2}M{\mathbf{I}}_{T}$, ${{\mathbf{A}}_k} = {{\mathbf{T}}^{\mathrm{H}}}{\mathbf{\Lambda}}{\left\{ {{{\mathbf{R}}_{\mathbf{H}}}} \right\}_{:k}}$, and ${\left\{ {{{\mathbf{R}}_{\mathbf{H}}}} \right\}_{:k}}$ denotes the $k$-th column of ${{{\mathbf{R}}_{\mathbf{H}}}}$.
Then, based on the orthogonal principle of the MMSE channel estimation\cite{BigueshMMSE2006} and Theorem 1, we have $\mathrm{E}\left\{ {{Q_{n,k}}} \right\} = \sum\nolimits_{l = 1}^{k} \sigma_{l}^2 {\beta _l}$.
Therein, variable $\sigma_{k}^2$ is the variance of the CEE of user $k$, which is obtained based on equation (27) in \cite{BigueshMMSE2006} as $\sigma _k^2 = \nu_k^2 - \frac{1}{M} \mathbf{A}_k^{\mathrm{H}}\mathbf{\Phi}^{-1}\mathbf{A}_k$.

Substituting $\mathrm{E}\left\{ {{s_{n,k}}} \right\}$, $\mathrm{E}\left\{ {{G_{n,k}}} \right\}$, and $\mathrm{E}\left\{ {{Q_{n,k}}} \right\}$ into \eqref{AESINR} and using the matrix inverse lemma on ${{\mathbf{\Phi }}^{ - 1}}$, we have
\begin{equation}\label{AESINR2}
\mathrm{ASINR}_k = \frac{{M\left( {\nu _k^2 - \frac{{{\sigma ^2}\nu _k^2}}{{{\sigma ^2} + {\alpha _k}\nu _k^2}}} \right){\beta _k}}}{{\sum\nolimits_{l = k + 1}^K {\nu _l^2} {\beta _l} + \sum\nolimits_{l = 1}^k {\frac{{{\sigma ^2}\nu _l^2}}{{{\sigma ^2} + {\alpha _l}\nu _l^2}}} {\beta _l} + {\sigma ^2}}}.
\end{equation}
We note that ${\mathrm{ASIN}}{{\mathrm{R}}_k}$ increases with $\left\{{\alpha _1}\right.$, $\ldots$, ${\alpha _k}$, $\left.{\beta _k}\right\}$, but decreases with $\left\{{\beta _1}\right.$, $\ldots$, ${\beta _{k-1}}$, ${\beta _{k+1}}$, $\ldots$, $\left.{\beta _K}\right\}$.
In other words, there exists a non-trivial trade-off between the allocation of pilot power and payload power.
Indeed, a higher pilot power ${\alpha _k}$ and payload power ${\beta _k}$ of user $k$ result in a higher ${\mathrm{ASIN}}{{\mathrm{R}}_k}$, while a higher payload power of other users ${\beta _1}$, $\ldots$, ${\beta _{k-1}}$, ${\beta _{k+1}}$, $\ldots$, ${\beta _{K}}$ will introduce more IUI for user $k$.
In contrast, high pilot powers ${\alpha _1}$, $\ldots$, ${\alpha _{k-1}}$ are beneficial to increase ${\mathrm{ASIN}}{{\mathrm{R}}_k}$ since they can reduce the residual interference by improving the quality of channel estimation.

\section{Joint Pilot and Payload Power Allocation}
The JPA design can be formulated to maximize the minimum weighted ${{\mathrm{ASIN}}{{\mathrm{R}}_k}}$ as follows\footnote{The performance degradation due to the adopted lower bound ASINR in \eqref{PilotPayloadPowerAllocation} is generally limited, while the simulation result for verification is not included in this letter due to the page limit.}:
\begin{align}
\label{PilotPayloadPowerAllocation}
&\underset{\left\{{\alpha_1}, \ldots, {\alpha_K}\right\},\left\{{\beta_1}, \ldots, {\beta_K}\right\}}{\maxo} \;\;\underset{k}{\min} \;\;\{{c_k{\mathrm{ASIN}}{{\mathrm{R}}_k}} \} \notag\\
\mbox{s.t.}\;\;\;\;
&\mbox{C1: } \alpha_k T + \beta_k D \le E_{\mathrm{max}}, \forall k, \notag\\[-1mm]
&\mbox{C2: } \alpha_k \ge 0, \beta_k \ge 0, \forall k, \;\;
\mbox{C3: } {\mathrm{ASIN}}{{\mathrm{R}}_k} \ge \gamma, \forall k.
\end{align}
\par\noindent
The constants ${\mathbf{c}} = \left[c_1, \ldots, c_K\right]$ are predefined weights for all the $K$ users.
Constraint C1 limits the pilot power ${\alpha_k}$ and the payload power $\beta_k$ with the maximum energy budget $E_{\mathrm{max}}$ for each user.
Constraint C2 ensures the non-negativity of ${\alpha_k}$ and $\beta_k$.
Constraint C3 requires the ASINR of user $k$ to be larger than a given threshold $\gamma$ to guarantee the data detection performance during SIC decoding.
Note that since the message of each user is decoded only once at the BS for uplink NOMA, a SIC decoding constraint is not required as imposed for downlink NOMA\cite{Hanif2016}.

This max-min problem formulation aims to mitigate the error propagation of the MRC-SIC decoding, which is dominated by the user with the minimum ASINR.
Furthermore, since the error propagation caused by the users at the forefront of the MRC-SIC decoding process, e.g. user 1, affects the data
detection of remaining undecoded users, and thus affects the system performance more significantly than other users.
Therefore, we have $c_1 \le c_2, \ldots, \le c_K$ to assign
different priorities to users in maximizing their ASINRs.
The formulated problem in (9) is a non-convex problem, where $\alpha_k$ and $\beta_k$ are coupled with each other severely in ${\mathrm{ASIN}}{{\mathrm{R}}_k}$.
%
Defining new optimization variables ${t_k} = \frac{{{\sigma ^2}\nu _k^2}}{{{\sigma ^2} + {\alpha _k}\nu _k^2}}$, $\forall k$, the problem in (9) is equivalent to the following optimization problem \cite{ChiangGP}:
\begin{align} \label{PilotPayloadPowerAllocation2}
&\underset{\left\{{t_1}, \ldots, {t_K}\right\},\left\{{\beta_1}, \ldots, {\beta_K}\right\},\lambda}{\maxo} \;\;\lambda \notag\\
\mbox{s.t.}\;\;\;\;
&\mbox{C1: } {\sigma ^2}Tt_k^{ - 1} + D{\beta _k} \le {{{\sigma ^2}T}}/{{\nu _k^2}} + {E_{{\mathrm{max}}}}, \forall k, \notag\\[-1mm]
&\mbox{C2: } 0 < t_k \le {{\nu _k^2}}, \beta_k \ge 0, \forall k, \notag\\[-1mm]
&\mbox{C3: } \sum\nolimits_{l = k + 1}^K {\gamma \nu _l^2{\beta _l}\beta _k^{ - 1}}  + \sum\nolimits_{l = 1}^k {\gamma {t_l}{\beta _l}\beta _k^{ - 1}}  + \gamma {\sigma ^2}\beta _k^{ - 1} \notag\\[-1mm]
&+ M{t_k} \le M\nu _k^2, \forall k, \notag\\[-1mm]
&\mbox{C4: } \sum\nolimits_{l = k + 1}^K {\nu _l^2\lambda {\beta _l}\beta _k^{ - 1}}  + \sum\nolimits_{l = 1}^k {{t_l}\lambda {\beta _l}\beta _k^{ - 1}}  + {\sigma ^2}\lambda \beta _k^{ - 1} \notag\\[-1mm]
&+ M{c_k}{t_k} \le M{c_k}\nu _k^2, \forall k,
\end{align}
where $\lambda > 0$ is an auxiliary optimization variable.
We can easily observe that the objective function and the functions on the left side of constraints C1, C3, and C4 in (10) are all valid posynomial functions\cite[Chapter~4]{Boyd2004}.
Therefore, the reformulated problem in (10) is a standard geometric programming (GP) problem\footnote{Actually, the problem transformation in transfoming (9) to (10) is standard and can be found in \cite{ChiangGP}.
However, without the proposed steps to simplify the performance analysis, the GP transformation\cite{ChiangGP} cannot be directly applied to the considered problem.}, which can be solved efficiently by off-the-shelf numerical solvers such as CVX\cite{cvx}.

\section{Simulation Results}

We use simulations to verify the developed performance analysis and evaluate the performance of the proposed JPA scheme for both uncoded and coded systems.
Two baseline schemes are introduced for comparison, where the equal power allocation (EPA) scheme sets the equal pilot and payload power, i.e., $\alpha_k = \beta_k = \frac{E_{\mathrm{max}}}{T+D}$, but the payload power allocation (PPA) scheme only fixes the pilot power as $\alpha_k = \frac{E_{\mathrm{max}}}{T+D}$ and optimizes over the payload power $\beta_k$ subject to the same constraint set as in \eqref{PilotPayloadPowerAllocation2}.

In the simulations, we set $M=2$, $T = K = 4$, $D = 96$, ${\mathbf{c}} = \left[\frac{1}{8}, \frac{1}{8}, \frac{1}{4}, \frac{1}{2}\right]$, $\gamma = 5$ dB, $E_{\mathrm{max}} = 20$ J, and ${\sigma ^2} = -100$ dBm.
It is assumed that there are $T+D = 100$ symbols in a CTI.
All the $K$ users are uniformly distributed in a single cell with a cell radius of $400$ m.
The weight ${\mathbf{c}}$ is selected deliberately to alleviate the impact of the error propagation from the previous users during the MRC-SIC decoding, where the optimal weight selection will be considered in future work.
The 3GPP urban path loss model\cite{Access2010} is adopted and quadrature phase shift keying (QPSK) modulation is used for all the simulation cases.
For the coded systems, we adopt the standard turbo code as stated in the 3GPP technical specification\cite{3GPPReportTurboCode2016}.
We assume that one codeword is spread over $N=10$ coherence intervals, which results in a codeword length of $1920$ bits.
%

\subsection{Individual ASINR}
Figure \ref{AESINR_Compare} depicts the individual ASINR for the considered three schemes for uncoded systems.
We can observe that the simulation results match perfectly with the theoretical results in \eqref{AESINR2}.
%
%
Besides, it can be observed that the lowest ASINR achieved by the PPA scheme and our proposed JPA scheme both occur at $5$ dB for user 4, which is much higher than the minimum ASINR provided by the EPA scheme occurring at $2.6$ dB for user 3.
This is owing to the adopted max-min principle and constraint C3 in the proposed problem formulation in \eqref{PilotPayloadPowerAllocation}.
Nevertheless, it can be observed that our proposed scheme provides a $2$ dB higher ASINR than that of the PPA scheme for users 1, 2, and 3.
This is because our proposed scheme can utilize the energy more efficiently than that of the PPA scheme.
Moreover, our simulation results demonstrate that the optimal power allocation $\alpha^{*}_k$ and $\beta^{*}_k$ can satisfy the energy budget constraint C1 in \eqref{PilotPayloadPowerAllocation}.
%

Furthermore, the Jain's fairness index (JFI) of the weighted ASINR for the considered three schemes are given by $J_{\mathrm{EPA}} = 0.6174$,
$J_{\mathrm{PPA}} = 0.9436$, and
$J_{\mathrm{JPA}} = 0.9983$,
respectively.
The EPA scheme achieves the lowest JFI, while both the PPA and JPA schemes enjoy a high JFI since they are based on the max-min resource allocation in \eqref{PilotPayloadPowerAllocation}.
In addition, our proposed JPA scheme offers a slightly higher JFI than that of the PPA scheme due to its efficient utilization of energy.

\subsection{Individual BER}
Figure \ref{BER_Compare} illustrates the individual BER performance for uncoded and coded systems.
We can observe that the coded system offers much lower BERs than the uncoded system owing to the coding gain.
For the EPA scheme, user 4 endures a high BER as user 3 despite it posses a larger ASINR than user 3, as shown in Figure \ref{AESINR_Compare}.
This reveals the error propagation of the MRC-SIC decoding for the EPA scheme.
The PPA scheme can improve the BER performance for users 1, 2, and 3 compared to the EPA scheme, while it fails to relieve user 4 from high BER.
However, our proposed scheme always enjoys the lowest BER compared to the two baseline schemes for all the users, especially for coded systems.
In fact, our proposed scheme can mitigate the error propagation more efficiently compared to the PPA scheme by optimally balancing the pilot and payload power.
It is worth to note that, with constraint C3 in \eqref{PilotPayloadPowerAllocation}, our proposed JPA scheme can guarantee the BER of all the users to be smaller than $10^{-3}$ for the coded systems, which validates the assumption about the sources of the error propagation in \eqref{MRC-SIC}.

\begin{figure}[t]
\vspace{-5mm}
\begin{minipage}{0.5\textwidth}
\hspace{-6mm}
\centering
\includegraphics[width=3in]{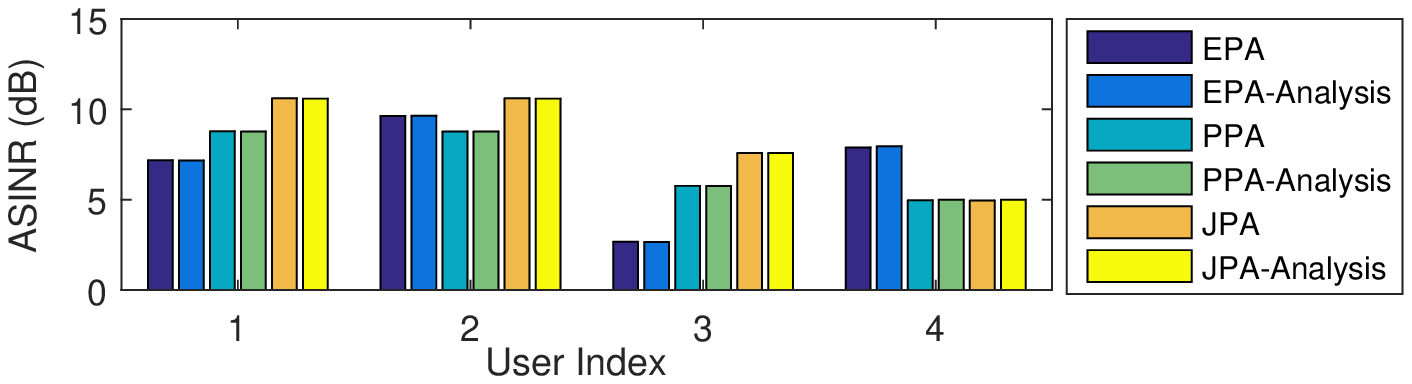}\vspace{-3mm}
\caption{Individual ASINR of uplink MIMO-NOMA with a MRC-SIC receiver.}
\label{AESINR_Compare}
\end{minipage}
\begin{minipage}{0.5\textwidth}
\centering\hspace{-5mm}
\includegraphics[width=3in]{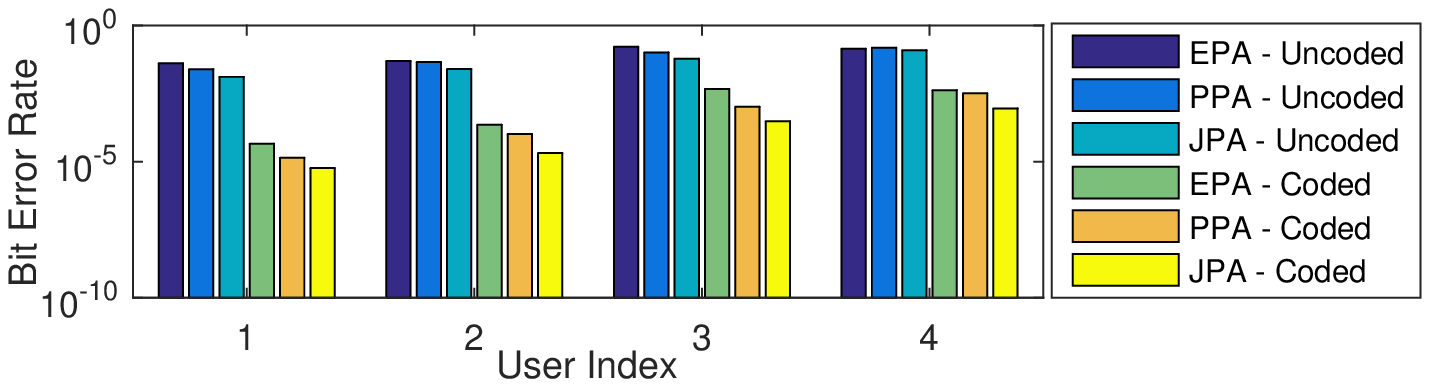}\vspace{-3mm}
\caption{Individual BER of uplink MIMO-NOMA with a MRC-SIC receiver.}\vspace{-5mm}
\label{BER_Compare}
\end{minipage}

\end{figure}

\subsection{BER versus Energy Budget}
For coded systems, Figure \ref{BER_SNR_Coded} shows the BER performance of our proposed scheme over the PPA scheme versus the energy budget $E_{\mathrm{max}}$.
%
%
Note that we set $\text{BER} = 0.5$ if the optimization problem in \eqref{PilotPayloadPowerAllocation} is infeasible to account the penalty of failure.
We can observe that our proposed scheme offers a much lower BER than that of the PPA scheme for all four users.
Interestingly, the BER performance gain is considerable in the moderate $E_{\mathrm{max}}$ regime, while it is marginal in the high $E_{\mathrm{max}}$ regime.
In fact, in the high $E_{\mathrm{max}}$ regime, the residual interference $Q_{n,k}$ vanishes owing to the high channel estimation accuracy.
%
%
Therefore, our proposed scheme can only offer diminishing gains in alleviating the impact of error propagation for further reducing the BER.
With the moderate $E_{\mathrm{max}}$, our proposed scheme can substantially improve the channel estimation, which can mitigate the residual interference during MRC-SIC decoding, and thus reduce the BER effectively.
In addition, it can be observed that an error floor for both
schemes appears at the BER region ranging from $10^{-2}$ to $10^{-5}$.
This early error floor is due to the joint effect of IUI, CEE, and the error propagation of the MRC-SIC decoding.
Note that an iterative receiver\cite{Xu2017} can be employed to lower the error floor level, which will be considered in our future work.

\section{Conclusion}
In this letter, a joint pilot and payload power control scheme was proposed for uplink MIMO-NOMA systems with MRC-SIC receivers to mitigate the error propagation problem.
By taking into account the CEE, we analyzed the ASINR during the MRC-SIC decoding.
The JPA design was formulated as a non-convex optimization problem for maximizing the minimum weighted ASINR and was solved by geometric programming.
Simulation results verified our analysis and demonstrated that our proposed scheme is effective in mitigating the error propagation in SIC which enhances the BER performance, especially in the moderate energy budget regime.

\begin{figure}[t!]
\centering\vspace{-5mm}
\includegraphics[width=3.3in]{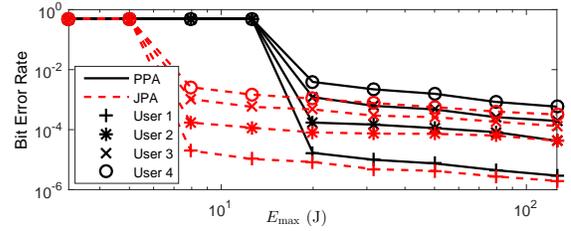}\vspace{-3mm}
\caption{BER performance versus energy budget $E_{\max}$ of uplink MIMO-NOMA with a MRC-SIC receiver.}\vspace{-5mm}
\label{BER_SNR_Coded}
\end{figure}


\end{document}